\newtheorem{remark}{Remark}
\newtheorem{theorem}{Theorem}[section]
\newtheorem{lemma}[theorem]{Lemma}
\newtheorem{proof}{Proof}
\newtheorem{assumption}{Assumption}
\title{ \LARGE  
Event-Triggered Nonlinear Model Predictive Control for Cooperative Cable-Suspended Payload Transportation with Multi-Quadrotors}
\author{
Tohid Kargar Tasooji, IEEE Member, Sakineh Khodadadi, Guangjun Liu, Senior Member, IEEE
\thanks{ This work was supported by the Natural Sciences and Engineering Research
Council (NSERC) of Canada.
Tohid Kargar Tasooji is with the Department of Aerospace Engineering,
Toronto Metropolitan University, Toronto, ON M5B 2K3, Canada (e-mail:
tohid.kargartasooji@torontomu.ca). Sakineh Khodadadi is with the Department of Electrical and Computer
Engineering, University of Alberta, Edmonton,AB,  AB T6G 1H9, Canada (email: sakineh@ualberta.ca).
Guangjun Liu is with the Department of Aerospace Engineering,
Toronto Metropolitan University, Toronto, ON M5B 2K3, Canada (e-mail:
gjliu@torontomu.ca).
Guanghui Wang is with the Department of Computer Science,
Toronto Metropolitan University, Toronto, ON M5B 2K3, Canada (e-mail:
wangcs@torontomu.ca).}
}
\begin{document}
\maketitle 
\begin{abstract}
Autonomous Micro Aerial Vehicles (MAVs), particularly quadrotors, have shown significant potential in assisting humans with tasks such as construction and package delivery. These applications benefit greatly from the use of cables for manipulation mechanisms due to their lightweight, low-cost, and simple design. However, designing effective control and planning strategies for cable-suspended systems presents several challenges, including indirect load actuation, nonlinear configuration space, and highly coupled system dynamics. In this paper, we introduce a novel event-triggered distributed Nonlinear Model Predictive Control (NMPC) method specifically designed for cooperative transportation involving multiple quadrotors manipulating a cable-suspended payload. This approach addresses key challenges such as payload manipulation, inter-robot separation, obstacle avoidance, and trajectory tracking, all while optimizing the use of computational and communication resources. By integrating an event-triggered mechanism, our NMPC method reduces unnecessary computations and communication, enhancing energy efficiency and extending the operational range of MAVs. The proposed method employs a lightweight state vector parametrization that focuses on payload states in all six degrees of freedom, enabling efficient planning of trajectories on the SE(3) manifold. This not only reduces planning complexity but also ensures real-time computational feasibility. Our approach is validated through extensive simulation, demonstrating its efficacy in dynamic and resource-constrained environments.
\end{abstract}

\begin{IEEEkeywords}
Event-triggered, Autonomous Micro Aerial Vehicles, Nonlinear Model Predictive Control, Cooperative Transportation
\end{IEEEkeywords}
\section{INTRODUCTION}
\IEEEPARstart{M}ICRO Aerial Vehicles (MAVs) equipped with onboard sensors hold significant promise for assisting or taking over human roles in intricate or hazardous missions like exploration [1], inspection [2], mapping [3], environmental interaction [4], search and rescue [5], and transportation [6], [7]. Aerial transportation, in particular, presents a swifter and more adaptable alternative to ground transportation, especially in congested urban settings. Moreover, a fleet of aerial robots can deliver supplies and establish communication in regions where GPS signals are sporadic or absent. To enhance payload capacity, one can opt for either larger aerial vehicles or employ a team of MAVs working together to transport cargo. Although the system's complexity escalates with the number of robots involved, a MAV team potentially offers greater resilience to the mission compared to a solitary MAV, particularly in instances of vehicle malfunctions.

In recent years, cooperative transportation utilizing quadrotors for cable-suspended payloads has emerged as a promising field of research. Several notable reported works have contributed significantly to this area, each offering unique insights and methodologies. 
 Loianno et al. [8] investigate cooperative transportation using small quadrotors that are equipped with a single camera and an inertial measurement unit (IMU). They tackle challenges related to maintaining system stability and accurately estimating the position and orientation (pose) of the quadrotors.
 It introduces novel approaches to coordinated control and cooperative localization, validated through experimental results. Unlike previous methods relying on linearized controllers or external systems, this work offers simpler solutions using nonlinear controllers and rigid connections. These advancements address technical challenges like complex dynamics and real-time estimation, enabling autonomous transportation in diverse environments.  Li et al. [9] address the technical challenges of cooperative transportation of cable-suspended rigid body payloads using MAVs. It proposes a distributed vision-based coordinated control system, enabling independent control of each MAV and distributed estimation of cable direction and velocity. Additionally, a cooperative estimation scheme is introduced, allowing inference of the payload’s full six Degrees of Freedom (DoF) states by sharing local position estimates and relative positions among the team of MAVs. Scalability considerations are also discussed, indicating feasibility in real-world scenarios such as warehouses and GPS-denied environments. 
 
 Li et al. [10], 
 present a novel Nonlinear Model Predictive Control (NMPC) method for quadrotors to manipulate rigid-body payloads via suspended cables in six DoF. It tackles challenges like indirect load actuation and highly coupled system dynamics, utilizing system redundancies for tasks like obstacle avoidance. Employing a lightweight state vector and planning trajectories on the SE(3) manifold, the method ensures real-time computation and scalability. Jin et al. [11] propose a constrained cooperative control framework for multiple Unmanned Aerial Vehicles (UAVs) transporting a three-dimensional load via cables. This framework specifically addresses performance constraints related to payload position tracking error and safety constraints involving the relative distance between quadrotors and obstacle avoidance. It effectively handles various challenges using universal barrier functions to manage inter-UAV distances and payload movement. Adaptive estimators manage uncertainties in UAV inertia matrices, ensuring compliance with constraint requirements. The framework guarantees exponential convergence in tracking errors while demonstrating efficacy through simulation. 
 
 Li et al. [12] present a new simulator for aerial transportation and manipulation with quadrotor MAVs using passive mechanisms. It includes models, planning, and control algorithms, alongside a groundbreaking collision model for interactions between quadrotors and payloads via cables. Through thorough simulation and real-world experiments, the authors validate the effectiveness of their approach, advancing our understanding and application of MAV technology.
 Erunsal et al. [14] compare linear and nonlinear Model Predictive Control strategies for MAVs trajectory tracking. It establishes fair testing conditions and uses identical algorithms, and parameters for both strategies. To address uncertainties, it employs parameter identification and a disturbance observer. Through extensive experiments, it proposes a decision-making framework for MPC selection based on trajectory characteristics and available resources, filling a gap in the literature.

In this work, our primary focus is on multi-quadrotor systems characterized by constrained energy and computational resources. Particularly, within the realm of multi-quadrotor systems, especially those with limited energy resources such as MAVs, optimizing energy utilization is of paramount importance. Typically, each robot in such systems is equipped with a compact embedded microprocessor responsible for managing sensor sampling, inter-robot communication, and controller updates. Traditionally, these systems rely on a classical time-triggered approach, where robots exchange state information periodically to update their control signals. However, this periodic exchange can lead to unnecessary communication and energy expenditure, which is particularly critical for MAVs with constrained battery capacity. Additionally, constraints like limited communication bandwidth can exacerbate issues such as packet dropouts and delays. To mitigate these challenges, there is a growing interest in event-triggered control. Unlike the time-triggered method, event-triggered control updates are triggered by specific events rather than occurring at fixed intervals [13], [18]-[22], [28]. This approach ensures system stability and performance while minimizing unnecessary communication and energy consumption, thereby enhancing the efficiency of MAVs and similar energy-constrained systems. Event-triggered control introduces a mechanism where control updates are only triggered when certain conditions are met, such as significant deviations from desired trajectories or changes in system dynamics. By implementing event-triggered control, we can reduce the computational load on the onboard processors of quadrotors, conserve energy, and extend the operational lifetime of the system. Moreover, event-triggered control can improve the overall system response by allocating computational resources more efficiently, thereby enhancing the robustness and scalability of cooperative transportation tasks.

Here are some challenges and gaps regarding the cooperative transportation of cable-suspended payloads with multi-quadrotors:
\begin{itemize}
    \item The first challenge lies in developing a control algorithm that facilitates the scalability of cooperative transportation systems for cable-suspended payloads in real-world scenarios. Meeting this challenge requires deployment strategies that seamlessly integrate multi-quadrotor systems into existing infrastructure, all while managing computational resources, communication bandwidth, and various operational constraints.

    \item The second challenge is how to design a control algorithm that can adapt to changes in payload weight, shape, and size during transportation while ensuring the stability and maneuverability of the quadrotors despite variations in payload characteristics.
    
    \item The third challenge is how to achieve precise synchronization among multiple quadrotors to minimize cable oscillations and maintain payload stability.

    \item The final challenge is how to manage various constraints, such as actuator limits, obstacle avoidance, and inter-robot constraints while ensuring that the payload position tracking error satisfies performance criteria.
\end{itemize}

This article presents a novel event-triggered distributed NMPC approach tailored to address cooperative transportation challenges with multiple quadrotors and cable-suspended payloads. This strategy effectively manages tasks like payload manipulation, inter-robot separation, obstacle avoidance, and trajectory tracking while ensuring efficient operation and respecting actuator constraints. Furthermore, this approach optimizes scalability, computational and communication resources, and energy consumption for MAVs involved. By minimizing resource overhead and energy consumption through adaptive task allocation and reduced communication exchanges, the system extends MAVs' operational range and endurance, making it well-suited for dynamic and resource-constrained environments. Overall, it offers a robust solution for complex cooperative transportation tasks while enhancing efficiency and sustainability. The main contributions of this work
can be summarized as follows.
\\
\begin{itemize}
    \item Our work integrates an event-triggered mechanism into the NMPC framework, reducing optimization computations frequency while maintaining control performance. By dynamically allocating resources, it enhances scalability and efficiency for real-time applications with limited processing capabilities.

    \item Our solution adapts dynamically to payload changes, triggering optimization updates selectively based on payload characteristics or environmental conditions. This ensures responsive and effective control actions for managing diverse payloads, crucial for real-world scenarios with varying dynamics.
    \item Our solution handles actuator limits, obstacle avoidance, and inter-robot constraints while ensuring the payload position tracking error meets performance criteria.

    \item Finally, our methodology enables scalable and practical deployment in real-world scenarios, leveraging event-triggered mechanisms for computational efficiency, dynamic adaptation, and robustness. It facilitates efficient and reliable distributed payload transportation in challenging environments like warehouses and GPS-denied areas, effectively addressing scalability and deployment challenges.
\end{itemize}
The rest of this article is organized as follows. In Section II, we formulate the problem to be solved, including the problem of interest, system dynamics, and system constraints and requirements. In Section III, we provide our event-based NMPC methodology. In Section IV, we conduct simulations to demonstrate the effectiveness of the proposed method. Finally, in Section V, we present our conclusions.
\begin{figure}[h!]
\captionsetup{justification=centering}
 \centering \includegraphics[width=0.5 \textwidth]{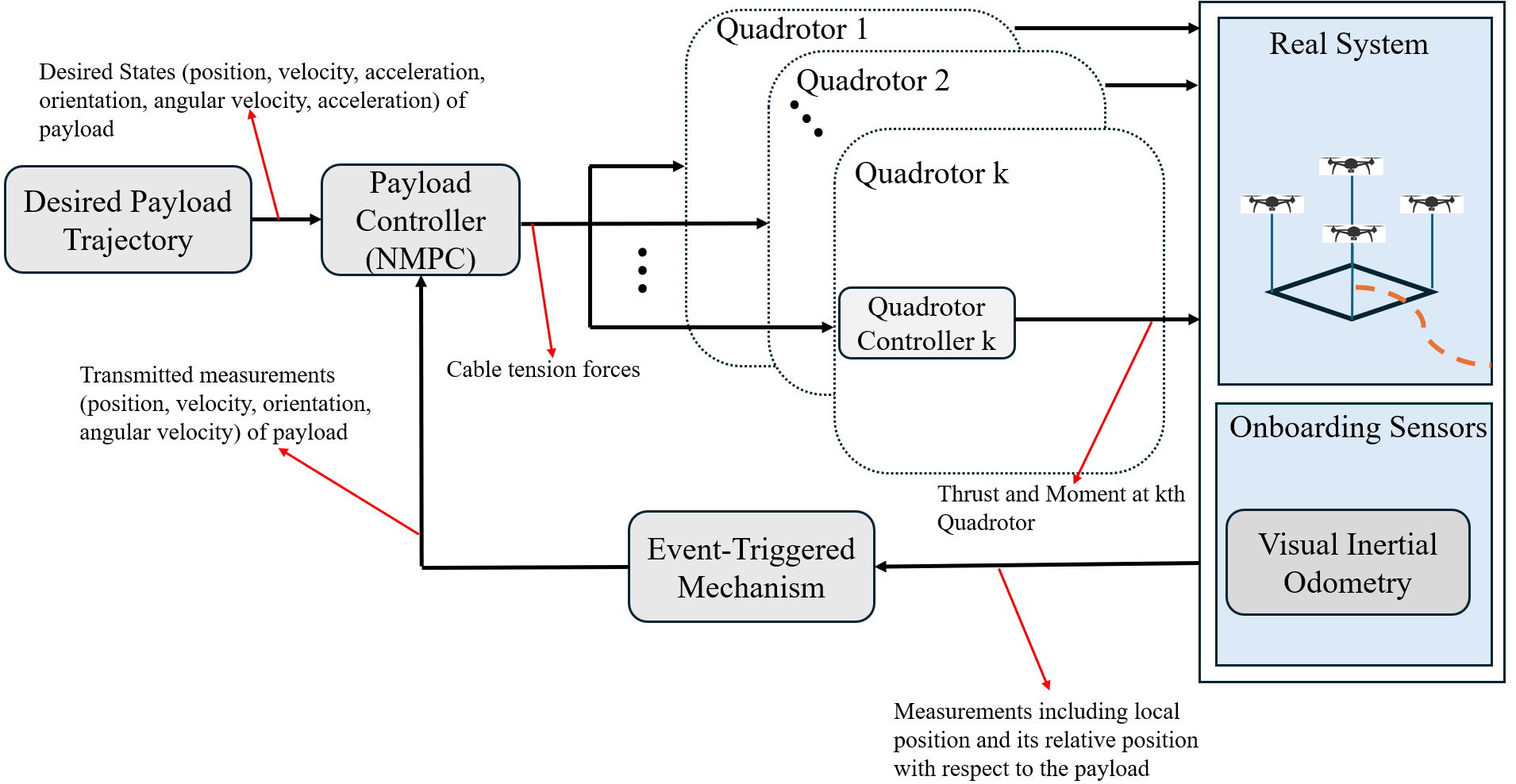}
  \caption{The control block diagram of the proposed approach}  \label{Fig1}
\end{figure} 
\section{PROBLEM FORMULATION}
\subsection{Problem of Interest}
 In the realm of cooperative transportation tasks using aerial robots, the need for efficient and scalable control strategies is paramount. This is particularly crucial for scenarios involving multiple quadrotors tasked with maneuvering a rigid-body payload suspended by cables, where ensuring accurate trajectory tracking of the payload's center of mass, maintaining inter-robot separation, avoiding obstacles, and adhering to the complex dynamics of the system in all 6 DoF are primary challenges. Achieving these objectives requires managing the constraints imposed by the quadrotor actuators and optimizing the distribution of cable tension forces. Traditional control methods, which often rely on continuous feedback mechanisms and frequent optimization problem-solving, struggle to scale efficiently with increasing system complexity and the number of agents involved.

To address these challenges, we introduce a groundbreaking event-triggered NMPC technique tailored for a fleet of quadrotors. Our NMPC framework is designed not only to handle the primary objective of payload manipulation but also to integrate additional functionalities such as inter-robot separation, obstacle avoidance, and desired trajectory tracking. This is achieved while ensuring compliance with the intricate dynamics governing the payload in all 6 DoF. A significant aspect of our approach is the optimization of cable tension forces, where a quadrotor controller utilizes the initial tension vector extracted from the projected horizon trajectory of the NMPC. This tension vector is then used to generate the necessary thrust and moment commands to execute the corresponding tension forces.
The proposed event-triggered NMPC method updates control actions only when necessary, reducing computational demands and enhancing scalability without compromising performance. This makes it suitable for managing multiple quadrotors and various payloads, providing robustness against uncertainties and disturbances. Designed for practical deployment in environments like warehouses and GPS-denied areas, it ensures efficient and reliable operation in dynamic settings.

\subsection{System Dynamics}
Consider a team of N (N $\geq$ 3) MAVs cooperatively transporting a rigid body payload connected via cables (refer to Figure 2), where the dynamics are described as [11]:
\begin{equation}
\text{MAVs}
\left\{
                \begin{array} {l} 
                m_{i}\ddot{p}_{i}(t) = - \text{sat}(F_{i}(t))R(\Theta_{i}(t))e_{z} + m_{i}ge_{z} \\ \ \ \ \ \ \ \ \ \ \ \ \  \ \ \   + T_{i}(t)R(\Theta_{L}(t))e_{i}(t) \\ 
                \dot{\Theta}_{i}(t) = \Gamma(\Theta_{i}(t))\omega_{i}(t) \\ 
                J_{i}\dot{\omega}_{i}(t) + \mathbb{S}(\omega_{i}(t)) J_{i} \omega_{i}(t) = \tau_{i}(t),
                \end{array}
\right.
\end{equation}

\begin{equation}
\text{Load}
\left\{
                \begin{array} {l} 
                m_{L}\ddot{p}_{L}(t) = m_{L}g e_{z} - \mathlarger{\mathlarger{\mathlarger{\mathlarger{\sum}}}}_{i=1}^{N} T_{i}(t)R(\Theta_{L}(t))e_{i}(t) \\ 
                \dot{\Theta}_{L}(t) = \Gamma(\Theta_{L}(t))\omega_{L}(t)\\
                J_{L}\dot{\omega}_{L}(t) + \mathbb{S}(\omega_{L}(t)) J_{L} \omega_{L}(t) = \mathlarger{\mathlarger{\mathlarger{\mathlarger{\sum}}}}_{i=1}^{N} \mathbb{S}(r_{i})(-T_{i}(t)e_{i}(t))
                \end{array}
\right.
\end{equation}
where $m_i \in \mathbb{R}^+$ is the mass of the $i$th quadrotor $(i = 1, \ldots, N)$, and $J_i \in \mathbb{R}^{3 \times 3}$ is a symmetric positive definite matrix representing the inertia. The position and attitude in the inertial reference frame are represented as $p_i(t) = [x_i(t), y_i(t), z_i(t)]^T \in \mathbb{R}^3$ and $\Theta_i(t) = [\phi_i(t), \theta_i(t), \psi_i(t)]^T \in \mathbb{R}^3$, respectively. $R(\Theta_i(t)) \in SO(3)$ is the rotation matrix, which relates the body-fixed frame to the inertial frame and is expressed as
\begin{equation}
R(\Theta_i) =
\begin{bmatrix}
c\phi_i c\theta_i & c\phi_i s\theta_i s\psi_i - c\psi_i s\phi_i & c\psi_i c\phi_i s\theta_i + s\phi_i s\psi_i \\
c\theta_i s\phi_i & c\phi_i c\psi_i + s\phi_i s\theta_i s\psi_i & c\psi_i s\phi_i s\theta_i - c\phi_i s\psi_i \\
-s\theta_i & c\theta_i s\psi_i & c\theta_i c\psi_i
\end{bmatrix}
\end{equation}
The angular velocities relative to this body-fixed frame are represented by $\omega_i(t) = [\omega_{xi}(t), \omega_{yi}(t), \omega_{zi}(t)]^T \in \mathbb{R}^3$, and $\Gamma(\Theta_{i}(t))$ is the transformation matrix that links the angular velocity in the body-fixed frame to the derivative of the Euler angles in the inertial frame and is expressed as
\begin{equation}
\Gamma(\Theta_i) =
\begin{bmatrix}
1 & s\phi_i t\theta_i & c\phi_i t\theta_i \\
0 & c\phi_i & -s\phi_i \\
0 & \frac{s\phi_i}{c\theta_i} & \frac{c\phi_i}{c\theta_i}
\end{bmatrix}
\end{equation}
which is well-defined and invertible when $-\pi/2 < \phi_i(t) < \pi/2$ and $-\pi/2 < \theta_i(t) < \pi/2$. Furthermore, $g \in \mathbb{R}$ is the gravitational acceleration and $\mathbf{e}_z = [0, 0, 1]^T \in \mathbb{R}^3$ is the unit vector. Next, $T_i(t) \in \mathbb{R}^+$ represents the tension in the $i$th rigid cable, $\text{sat}(a)$ denotes the saturation function, where $a \in \mathbb{R}^+$ and $F_i(t) \in \mathbb{R}^+$ represents the thrust of the $i$th quadrotor $F_i(t) \in \mathbb{R}^+$ $(i = 1, \ldots, N)$ which is subject to saturation nonlinearity described in
\begin{equation}
\text{sat} (F_i(t)) =
\begin{cases}
F_{\text{max}}, & \text{if } F_i(t) \geq F_{\text{max}} \\
F_i(t), & \text{otherwise}
\end{cases}
\end{equation}
where $F_{\text{max}_i}$ is the saturation limit for thrust $F_i(t)$ and $\text{sign}(\cdot)$ is the sign function. Finally, $\tau_i(t) \in \mathbb{R}^3$ represents the torques of the $i$th quadrotor $(i = 1, \ldots, N)$.\\
Similarly, $m_L \in \mathbb{R}^+$ is the load mass, and $J_L \in \mathbb{R}^{3 \times 3}$ is the load inertia that is symmetric and positive definite, where the subscript $L$ stands for “Load” $\mathbf{p}_L(t) = [x_L(t), y_L(t), z_L(t)]^T \in \mathbb{R}^3$ and $\mathbf{\Theta}_L(t) = [\phi_L(t), \theta_L(t), \psi_L(t)]^T \in \mathbb{R}^3$ represent the load position and attitude in the inertial reference frame, respectively, and $\boldsymbol{\omega}_L(t) = [\omega_{xL}(t), \omega_{yL}(t), \omega_{zL}(t)]^T \in \mathbb{R}^3$ represents the load rotational velocity with respect to its body-fixed frame. Furthermore, as shown in Figure 2, $\mathbf{r}_i \in \mathbb{R}^3$ is the attachment point on the payload by the $i$th link, represented in the payload body-fixed frame. Finally, $\mathbf{e}_i(t) \in S^2$ is the unit direction vector from the $i$th MAV mass center towards the $i$th link attachment point. 
\subsection{System Constraint Requirements}
Performance Constraints Requirement: In the cooperative transportation task, the payload is expected to follow a desired trajectory, represented by $\mathbf{p}_d^L(t) = [x_d^L(t), y_d^L(t), z_d^L(t)]^T \in \mathbb{R}^3$. Additionally, all MAVs are required to follow a desired formation pattern, where the coordinates of the reference trajectory for the $i$th vehicle where $i = 1, \ldots, N$, denoted by $\mathbf{p}_d^i(t) = [x_{d}^{i}(t), y_{d}^{i}(t), z_{d}^{i}(t)]^T \in \mathbb{R}^3$. Now, define the line-of-sight (LOS) distance tracking error for the payload as
\begin{equation}
e^L(t) = \sqrt{(x_L - x_{d}^L)^2 + (y_L - y_{d}^L)^2 + (z_L - z_{d}^L)^2}
\end{equation}
which measures the distance between the desired and actual position of the payload. Furthermore, for the $i$th quadrotor ($i = 1, \ldots, N$), define the line-of-sight distance tracking error $e^i(t)$ as
\begin{equation}
e^i(t) = \sqrt{(x_i - x_{d}^i)^2 + (y_i - y_{d}^i)^2 + (z_i - z_{d}^i)^2}.
\end{equation}

 During cooperative transportation, there are certain system constraint requirements that need to be satisfied. In order to ensure the precise and safe functioning of the system, several constraints need to be satisfied. Firstly, the payload position tracking error $e^L(t)$ must adhere to the performance constraint requirement:
\begin{equation}
e^L(t) < \epsilon_d^H(t), \quad 
\end{equation}
where $\epsilon_d^H(t) > 0$ is a time-varying constraint function, which is user-defined and belongs to the class $C^3$. This constraint ensures that the payload does not deviate significantly from its desired trajectory.

Similarly, the tracking error $e^i(t)$ for each MAV must meet the user-defined performance constraint:
\begin{equation}
e^i(t) < \epsilon_{d}^{H_{i}}(t),
\end{equation}
where $\epsilon_{d}^{H_{i}}(t) > 0$ is a time-varying constraint function, also belonging to the class $C^3$. This constraint ensures that each MAV stays close to its desired trajectory.

Inter-robot Separation Constraints: We first define the desired LOS relative distance between any two quadrotors $\delta_{ij}^d(t)$ as follows
\begin{equation}
\delta_{ij}^d(t) = \sqrt{(x_{d}^i - x_{d}^j)^2 + (y_{d}^i - y_{d}^j)^2 + (z_{d}^i - z_{d}^j)^2},
\end{equation}
and the actual LOS relative distance $\delta_{ij}(t)$ is given by
\begin{equation}
\delta_{ij}(t) = \sqrt{(x_i - x_j)^2 + (y_i - y_j)^2 + (z_i - z_j)^2}.
\end{equation}
Next, the line-of-sight relative distance tracking error between the $i$th and $j$th quadrotors (where $i, j = 1, \ldots, N$, and $j \neq i$) is defined as $e^{ij}(t) = d_{ij}(t) - \delta_{ij}(t)$. This error must satisfy the safety constraint:
\begin{equation}
-\epsilon_{W_{ij}}(t) < e^{ij}(t) < \epsilon_{H_{ij}}(t),
\end{equation}
where $\epsilon_{H_{ij}}(t) > 0$ represents the upper constraint for $e^{ij}(t)$, and $-\epsilon_{W_{ij}}(t) < 0$ is the lower bound, with $0 < \epsilon_{W_{ij}}(t) < \delta_{ij}(t)$. Both $\epsilon_{H_{ij}}(t)$ and $\epsilon_{W_{ij}}(t)$ belong to the class $C^3$. This constraint ensures that the inter-quadrotor distance remains within acceptable bounds.

Constraints for Avoiding Obstacles: Given the position of an obstacle as $\mathbf{p}_O(t) = [x_O(t), y_O(t), z_O(t)]^T$, we define the line-of-sight (LOS) distance between quadrotors and payload as
\begin{equation}
e^{LO}(t) = \sqrt{(x_L - x_O)^2 + (y_L - y_O)^2 + (z_L - z_O)^2}.
\end{equation}
the following constraints can be formulated to ensure that both the robot team and the payload maintain a secure distance from obstacles in the global frame:
\begin{equation}
e^{LO}(t) \geq \epsilon_{OL}
\end{equation}
Actuator Constraints:
As we obtain the predicted cable
tension force, we can further limit the tension
force norm to provide some boundary for the actuators
\begin{equation}
\lVert T_{i} \rVert \leq f_{max}
\end{equation}
\begin{remark}
Both (8) and (9) are considered performance constraint requirements. They indicate that during cooperative load transportation, both the load and the MAV team should closely follow their desired trajectories. Violating these performance constraint requirements can lead to failure in maintaining the desired formation and/or collisions with environmental boundaries. Constraint (12) belongs to the safety constraint. The physical meaning of this requirement is that any two MAVs in the team cannot be too close or too far apart, which would lead to inter-MAV collisions or overstretching of the suspension cables, respectively. Constraint (14) belongs to the safety distance between the obstacle and payload. Also, constraint (15) represents the boundary on tension force.    
\end{remark}

\section{Methodology}
In this section, we present the hierarchical controller design tailored for systems comprising three or more quadrotors manipulating a rigid-body payload via cables. 

\subsection{Event-triggered Mechanism}
The event-triggering condition in the context of cooperative transportation of cable-suspended payloads with multi-quadrotors is crucial for determining when to initiate control updates. This condition is based on comparing the current state measurement with its optimal prediction from the previous triggering event. When the error between them surpasses a predefined threshold, it signals the need for recalculating the control sequence.

The event-triggering condition can be defined as follows:

Let \( \xi_f(k) \) represent the current state measurement at time step \( k \), and \( \xi^*_f(k|k_j) \) denote its optimal prediction obtained at the previous triggering time \( k_j \). The event is triggered when the following condition is violated:

\begin{equation}
\| \xi_f(k) - \xi^*_f(k|k_j) \| > \alpha \| \xi_f(k) \| + \beta
\end{equation}

where \( \alpha \) and  \( \beta \) are the event-triggered parameters and \( \| \cdot \| \) denotes the norm of the error. This condition ensures that a new control sequence is calculated only when the deviation between the current state and its optimal prediction exceeds a certain acceptable limit.

In addition to the event-triggering condition, the prediction horizon update strategy plays a vital role in adapting the NMPC scheme to the system dynamics. This strategy ensures that the prediction horizon is adjusted dynamically based on the interexecution time and the prediction made at the previous triggering event. By limiting the shrinking size of the horizon, closed-loop stability of the system is maintained while ensuring the feasibility of the Optimal Control Problem (OCP) at each triggering time.

The prediction horizon update strategy can be described as follows:

The length of the prediction horizon, \( N_k \), is determined based on the interexecution time \( m_k \) and the prediction made at the previous triggering event \( k_j \). This relationship is expressed as:

\begin{equation}
N_k = g(m_k, k_j)
\end{equation}

where \( g(\cdot) \) is a function that determines the appropriate length of the prediction horizon based on the given inputs. 

Moreover, the development of the event-triggering condition and the prediction horizon update strategy are co-designed to ensure effective control. Specifically, the prediction horizon \( N_k \) at time \( k \) is related to the interexecution time \( m_k \) and the prediction horizon \( N_{k_j} \) at the previous triggering event \( k_j \). The relationship can be expressed as:

\begin{equation}
N_k = h(N_{k_j}, m_k)
\end{equation}

where \( h(\cdot) \) is a function that determines the updated prediction horizon based on the previous prediction horizon and the inter-execution time.

\subsection{Payload Nonlinear Model Predictive Control}
In this section, we introduce a novel NMPC approach designed for manipulating the pose of the payload. NMPC, in general, is a predictive control method that computes a sequence of system states $\{X_0, X_1, \ldots, X_N\}$ and inputs $\{U_0, U_1, \ldots, U_{N-1}\}$ over a predetermined time horizon of $N$ steps. This computation aims to optimize an objective function while adhering to nonlinear constraints and system dynamics. The objective function comprises a running cost, $h(X, U)$, and a terminal cost, $h_N(X)$, and is formulated as follows [23]-[27]:
\begin{equation}
\min_{X_0,\dots,X_N, U_0,\dots,U_{N-1}} \mathlarger{\mathlarger{\sum}}_{i=0}^{N-1} h(X_i,U_i) + h_N(X_N),\\
\end{equation}
subject to the constraints:
\begin{equation*}
X_{i+1} = f(X_i,U_i), \quad \forall i = 0, \dots, N - 1 \\ 
\end{equation*}
\begin{equation*}
X_0 = X(t_0),\\
\end{equation*}
\begin{equation*}
g(X_i,U_i) \leq 0,
\end{equation*}
where $X_{i+1} = f(X_i,U_i)$ represents the nonlinear system dynamics in discrete form, and $g(X,U)$ represents additional state and input constraints. The optimization occurs with the initial condition $X_0$ while respecting the system dynamics $f(X,U)$.

In the following, we present our proposed NMPC formulation for transporting a rigid-body payload with $n$ quadrotors using suspended cables. We discuss the advantages of the chosen cost function and describe the nonlinear system dynamics and constraints for exploiting secondary tasks (obstacle
avoidance and inter-robot spatial separation) and respecting actuator limits.

Cost Function:
First, let us define the state vector and input vector of the NMPC as 
\begin{equation}
X = \begin{pmatrix} p_L \\
\Theta_L \\
v_L \\
\omega_L \\ \end{pmatrix} , \quad U = \begin{pmatrix} F \\ M \end{pmatrix}
\end{equation}
where based on Eq. (2) \(F\)  is the sum of all the cable tension forces and \(M\) is the total moments generated by the cable forces on the payload. 
Since the task is to manipulate the payload to track a desired trajectory in \( SE(3) \), we choose the following objective function:
\begin{equation}
\min_{X_i,U_i} e^T _{X_N} Q _{X_N} e_{X_N} + \mathlarger{\mathlarger{\sum}}_{i=0}^{N-1} e^T _{X_i} Q _{X_i} e_{X_i} + e^T _{U_i} Q_{U} e_{U_i}
\end{equation}
where \(e_{X_i},  e_{U_i} \) are the state and wrench errors determined by comparing predicted states and inputs to trajectory references defined as:
\begin{equation}
e_{X_i} = \begin{pmatrix} p_{L,des} - p_L \\ v_{L,des} - v_L \\ \log(\Theta_L \otimes \Theta_{des,L}^{-1}) \\ \omega_{L,des} - \omega_L \end{pmatrix}_t^i , \quad e_{U_i} = \begin{pmatrix} F_{des} - F \\ M_{des} - M  \end{pmatrix}_t^i
\end{equation}
The proposed method introduces a streamlined approach to Nonlinear Model Predictive Control (NMPC), offering significant advantages over existing techniques. By simplifying the planning process to focus solely on payload states and their derivatives in SE(3), the complexity associated with trajectory planning for cable directions and tension is eliminated. This results in a four-fold reduction in optimization dimensionality compared to previous methods, making the approach more computationally efficient. Additionally, the inclusion of a null space coefficient vector allows for the integration of secondary tasks such as obstacle avoidance, enhancing the system's versatility. The proposed NMPC is solved using Sequential Quadratic Programming (SQP) in real-time, with HPIPM selected as the solver for quadratic programming within SQP.

System Dynamics: Based on equations (1) and (2), we derive the dynamic equation for the state \( X \) as 
\begin{equation}
\dot{X} = f(X,U) = \begin{pmatrix} \dot{p}_L \\ \Theta_{L} \otimes \omega_L \\ \frac{1}{m_L}F - g \\ J_L^{-1}(M - \omega_L \times J_L\omega_L) \end{pmatrix}
\end{equation}
To integrate the dynamic equation into the discrete-time formulation, we employ the 4th order Runge-Kutta method to numerically integrate \( \dot{X} \) over the sampling time \( dt \) given the state \( X_i \) and input \( U_i \) as \( X_{i+1} = f_{RK4}(X_i,U_i, dt) \).
\subsection{Payload Event-triggered Nonlinear Model Predictive Control }
The condition for triggering events is formulated based on comparing the current state with its optimal prediction from the last triggering event. When the discrepancy between them exceeds a preset threshold, it indicates the need for recalculating a new control sequence. Initially, we analyze the error between the actual and optimal trajectories using a multi-step $m$ open-loop control from the same starting point:
\begin{equation}
\begin{array}{l}
\lVert \xi_f(k_j + m) - \xi_f^* (k_j + m|k_j) \rVert\\
= \bigg \lVert \xi_f(k_j) + \mathlarger{\mathlarger{\sum}}_{i=0}^{m-1} \left[\delta f_h\left(\xi_f(k_j + i), u^*_f(k_j + i)\right) + d(k_j + i)\right] \\ \ \ - \xi^*_f(k_j) - \mathlarger{\mathlarger{\sum}}_{i=0}^{m-1} \delta f_h\left(\xi^*_f(k_j + i), u^*_f(k_j + i)\right) \bigg \rVert
\end{array}
\end{equation}

Using the condition $\xi_f (k_j) = \xi^*_f(k_j)$ and Lipschitz condition yields
\begin{equation}
\begin{array}{l}
\lVert \xi_f(k_j + m) - \xi^*_f \left(k_j + m \middle| kj\right) \rVert \\
= L_P \mathlarger{\mathlarger{\sum}}_{i=0}^{m-1} \Big \lVert \left(\xi_f(k_j + i) - \xi^*_f(k_j + i)\right) \Big \rVert + m\eta.
\end{array}
\end{equation}
Applying the Gronwall–Bellman–Ou–Iang-type inequality given by Lemma [16] corresponds to
\begin{equation}
\begin{array}{l}
\Big \lVert \xi_f(k_j + m) - \xi^*_f(k_j + m) \Big \rVert \leq m\eta e^{L_P \delta (\sigma - 1)}
\end{array}
\end{equation}
This analysis leads to setting the threshold for the triggering condition $\sigma \eta e^{L_P \delta (\sigma - 1)}$ based on a tuning parameter, where $\sigma$ represents the minimum time between executions. The triggering condition is established as the difference between the current and predicted states exceeding the computed value:
\begin{equation}
\begin{array}{l}
\Big \lVert \xi_f(k_j + m_{k_{j}}) - \xi^*_f(k_j + m_{k_{j}}) \Big \rVert \geq \sigma \eta e^{L_P \delta (\sigma - 1)}
\end{array}
\end{equation}
The interval between executions is determined as
\begin{equation}
\begin{array}{l}
m_{k_{j}} = \textbf{sup}_m \bigg \{ \Big \lVert \xi_f(k_j + m) - \xi^*_f(k_j + m) \Big \rVert \geq \\ \ \ \ \  \ \ \ \ \  \alpha \| \xi_f(k_j + m)) \| + \beta \bigg  \}
\end{array}
\end{equation}
the shortest duration where this difference remains below the threshold.

Additionally, the system is automatically triggered at the initial time $k_{0}$ and at $k_{j} + N_{k_{j}}$ to account for scenarios where no control action is available. As a result, the inter-execution time is upper and lower bounded by $\sigma \leq  m_{k_j} \leq  N_{k_j}$.
\begin{remark}
Increasing the minimum inter-execution time $\sigma$ has the potential to decrease the triggering frequency; however, this reduction often comes with a compromise in tracking performance. Thus, selecting an appropriate value for $\sigma$ entails finding a balance between these competing factors.
\end{remark}

A long prediction can ensure stability but may complicate the Optimal Control Problem (OCP). Reducing the prediction horizon may simplify the OCP by decreasing its dimensionality. In standard Model Predictive Control (MPC), the prediction horizon remains constant to ensure that the tracking error at its end enters the terminal region. However, as the tracking error approaches the terminal region, a shorter prediction horizon may suffice to satisfy the terminal constraint. This forms the basis of the shrinking strategy for the prediction horizon described below. At each triggering instant, the previous prediction horizon's length can inform the design of the shrinking strategy. The shortest horizon length from the previous prediction is determined by:
\begin{equation}
\begin{array}{l}
\hat{N}_{k_j} = \inf \{ i : p^*_e(kj + i|k_j) \in \Omega_{\epsilon}, i \in \mathbb{N}_{[0, N_{k_j} - 1]} \}
\end{array}
\end{equation}
However, excessively shortening the prediction horizon may result in the infeasibility of the OCP and, consequently, destabilize the system. Therefore, the stability condition of the closed-loop system must be considered. To ensure the stability of the closed-loop system, it is necessary that $k_{j+1} + N_{k_j+1} > k_j + N_{k_j}$. Consequently, an upper bound for the shrinking length is given by 
\begin{equation}
\begin{array}{l}
n_{k_j} \leq m_{kj} - 1
\end{array}
\end{equation}

To summarize, the prediction horizon is updated and decreased by 
\begin{equation}
\begin{array}{l}
n_{k_j} = \min \{ m_{k_j} - 1, N_{k_j} - \hat{N}_{kj} \}
\end{array}
\end{equation}
The development of the event-triggering condition and the prediction horizon update strategy are co-designed rather than simply combined. From Equation (31), one can observe that the reduced length of the horizon $n_{kj}$ is influenced by both the inter-execution time $m_{kj}$ and the prediction at the previous triggering instant. Equation (29) introduces a shrunken horizon $\hat{N}_{kj}$ that ensures the feasibility of the OCP at the current triggering time, while Equation (30) constrains the shrinking size of the horizon to establish closed-loop stability of the system. The prediction horizon at $k_j+1$ should satisfy 
\begin{equation}
\begin{array}{l}
k_j + N_{k_j} < k_j+1 + N_{k_j+1} \leq k_j+1 + N_{k_j}
\end{array}
\end{equation}
Simultaneously, the tracking error is ensured to enter the terminal region at $k_j+1 + N_{k_j+1}$. The following theorem states the main results of event-triggered MPC with an adaptive prediction horizon scheme.
\begin{assumption}
There exists a robust terminal set $\Omega_\epsilon$ and a corresponding controller $\kappa(e_{X_i})$, such that, $\forall e_{X_i}(k) \in \Omega_\epsilon$
\begin{equation}
g(e_{X_i}(k + 1)) - g(e_{X_i}(k)) \leq -L(e_{X_i}(k), \kappa(\hat{x}(k))).
\end{equation}
\end{assumption}
\begin{lemma}
The nonlinear function $f_h(\xi, u)$ is locally Lipschitz continuous in $\xi$ with Lipschitz constant $L_P = \sqrt{2(a^2 + \rho^2 b^2)}$.    
\end{lemma}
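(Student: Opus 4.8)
The plan is to establish the local Lipschitz bound by direct estimation of the difference $\|f_h(\xi_1,u)-f_h(\xi_2,u)\|$ in terms of $\|\xi_1-\xi_2\|$, exploiting the block structure of the dynamics $f$ given in Eq.~(25). First I would recall that the state $X$ (here denoted $\xi$) decomposes into the translational part $(p_L, v_L)$ and the rotational part $(\Theta_L, \omega_L)$. The translational block of $f$ is affine in the state: $\dot p_L = v_L$ and $\dot v_L = \frac{1}{m_L}F - g e_z$, so its contribution to the Lipschitz difference is linear and contributes a constant (call it $a$, absorbing $m_L$ and the linear map $v_L \mapsto v_L$). The genuinely nonlinear terms are the attitude kinematics $\Theta_L \otimes \omega_L$ (or $\Gamma(\Theta_L)\omega_L$) and the gyroscopic term $J_L^{-1}(M - \omega_L \times J_L \omega_L)$. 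For the latter, I would use that $\omega \mapsto \omega \times J_L\omega$ is quadratic, hence only \emph{locally} Lipschitz: on a compact set where $\|\omega_L\| \le \rho$, the map has Lipschitz constant proportional to $\rho\,\|J_L\|\,\|J_L^{-1}\|$, which furnishes the factor $\rho b$. The cross term $\Gamma(\Theta_L)\omega_L$ is smooth on the chart $-\pi/2 < \phi_L,\theta_L < \pi/2$ and bounded there together with its derivative, contributing again to the $\rho$-dependent part.

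Next I would assemble these block estimates. Writing $\delta\xi = \xi_1 - \xi_2$ and splitting it into $\delta\xi_{\text{trans}}$ and $\delta\xi_{\text{rot}}$, the combined bound takes the form $\|f_h(\xi_1,u)-f_h(\xi_2,u)\|^2 \le a^2\|\delta\xi_{\text{trans}}\|^2 + \rho^2 b^2 \|\delta\xi_{\text{rot}}\|^2 \le \big(a^2 + \rho^2 b^2\big)\|\delta\xi\|^2$, and then one factor of $2$ appears from crude bounding of the cross-terms when the two blocks are not cleanly separated (e.g., the attitude kinematics mixes $\Theta_L$ and $\omega_L$), giving $L_P = \sqrt{2(a^2 + \rho^2 b^2)}$. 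I would be explicit that $a$ collects the norm of the linear translational map (and $1/m_L$), while $b$ collects $\|J_L^{-1}\|$ and $\|J_L\|$ from the rotational dynamics, and $\rho$ is the a priori bound on $\|\omega_L\|$ valid on the operating region.

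The main obstacle I anticipate is handling the attitude representation carefully: the term $\Theta_L \otimes \omega_L$ as written is schematic (it stands for the kinematic map $\Gamma(\Theta_L)\omega_L$, or a quaternion/\,$SO(3)$ product), and one must verify that on the admissible chart it is smooth with uniformly bounded Jacobian, so that a finite local Lipschitz constant exists at all; near the chart singularities $\theta_L = \pm\pi/2$ the constant blows up, so the estimate is inherently \emph{local} and the statement should be read as holding on the compact operating set $\{\|\omega_L\|\le\rho,\ |\phi_L|,|\theta_L| < \pi/2 - \text{const}\}$. A secondary, more routine obstacle is tracking the constants through the block decomposition so that they collapse exactly into the claimed closed form $\sqrt{2(a^2+\rho^2 b^2)}$ rather than some looser expression; this amounts to choosing $a$ and $b$ as the relevant operator norms and using $\|\cdot\|^2 \le \|\cdot\|_{\text{block 1}}^2 + \|\cdot\|_{\text{block 2}}^2$ together with one application of $(x+y)^2 \le 2x^2 + 2y^2$.
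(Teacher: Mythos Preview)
The paper does not prove this lemma at all: it is stated without proof, and more importantly the objects it refers to---the map $f_h$, and the constants $a$, $b$, $\rho$---are never defined anywhere in the paper. The form $L_P=\sqrt{2(a^2+\rho^2 b^2)}$ is lifted verbatim from reference~[16] (Sun, Xia, Dai, Campoy), where it is derived for the \emph{unicycle} error dynamics; there $a$ and $b$ are bounds on the reference linear and angular velocities and $\rho$ is a Lipschitz-type bound tied to the trigonometric structure of the unicycle model, and the factor $2$ comes from the two-component position error. So there is no ``paper's own proof'' to compare your proposal against.

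Your attempt is a reasonable good-faith reconstruction for the payload rigid-body dynamics of Eq.~(23), and the block decomposition (affine translational part, locally Lipschitz rotational part with a $\rho$-dependent bound from the quadratic gyroscopic term) is the right general strategy for such systems. But you should be aware that you are not reproducing an argument from the paper---you are inventing one, and in doing so you are assigning meanings to $a$, $b$, $\rho$ that the authors never specify. In particular, the exact closed form $\sqrt{2(a^2+\rho^2 b^2)}$ is an artifact of the two-dimensional unicycle error structure in~[16] and will not fall out naturally from the $13$-dimensional rigid-body state here; at best you would obtain a constant of the same qualitative shape after redefining $a$, $b$, $\rho$ to suit, which is essentially what your last paragraph concedes. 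If the goal is fidelity to the paper, the honest statement is that the lemma is asserted, not proved, and its constants are imported from a different dynamical model.
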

\begin{theorem}
Suppose that the payload is controlled by the optimizing control \( u^*_f(k_j + i | k_j) \) at time \( k_j + i \), the controller update time is determined by equation (27), and the prediction horizon is decreased according to equation (31). Then, the payload system (23) is input-to-state stable.
\end{theorem}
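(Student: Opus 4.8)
\textit{Proof plan.} The plan is to use the optimal value function of the payload OCP as an ISS--Lyapunov function and to show that it decays between successive triggering instants up to a residual term governed by the disturbance bound. Denote by $J^*(k_j)$ the optimal cost of the payload problem (21)--(23), solved at the triggering time $k_j$ with horizon $N_{k_j}$ and evaluated at the measured error state $e_X(k_j)$. Since the weighting matrices $Q_{X_i}$, $Q_U$ and $Q_{X_N}$ in (21) are positive definite and the stage/terminal costs vanish only at the origin, standard arguments yield class-$\mathcal{K}_\infty$ sandwich bounds $\alpha_1(\|e_X\|) \le J^*(k_j) \le \alpha_2(\|e_X\|)$, so $J^*$ is a legitimate ISS--Lyapunov candidate for the closed-loop dynamics (23).

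First I would establish recursive feasibility of the OCP at every triggering instant. Given the optimal predicted trajectory computed at $k_j$, I construct a candidate solution at $k_{j+1} = k_j + m_{k_j}$ by (i) discarding the first $m_{k_j}$ control moves already applied, keeping the tail of length $N_{k_j}-m_{k_j}$, and (ii) extending it with $m_{k_j}-n_{k_j}\ge 1$ control moves generated by the terminal controller $\kappa(e_{X_i})$ of Assumption 1. Step (ii) is admissible because, by the definition of the shrunken horizon $\hat{N}_{k_j}$ in (29) together with the bound $n_{k_j}\le N_{k_j}-\hat{N}_{k_j}$ from (31), the predicted error state at the end of the retained tail has already entered the robust terminal set $\Omega_\epsilon$. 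The update rule (31), the upper bound (30), and the stability ordering (32) then guarantee that the constructed candidate has exactly length $N_{k_{j+1}}=N_{k_j}-n_{k_j}$, that the terminal constraint $e_X(k_{j+1}+N_{k_{j+1}})\in\Omega_\epsilon$ holds, and that all stage constraints $g(X_i,U_i)\le 0$ --- including obstacle avoidance (14), inter-robot separation (12) and the tension bound (15) --- remain satisfied under the bounded perturbation; here robustness of $\Omega_\epsilon$ and the disturbance magnitude $\eta$ are essential, and the automatic re-trigger at $k_j+N_{k_j}$ supplies the upper bound $\sigma\le m_{k_j}\le N_{k_j}$ used throughout.

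Next I would derive the descent inequality. Evaluating $J^*$ at the feasible candidate upper-bounds $J^*(k_{j+1})$; subtracting $J^*(k_j)$ splits the change into three contributions: the stage costs of the $m_{k_j}$ applied steps (entering with a negative sign and producing the $-\alpha_3(\|e_X(k_j)\|)$ term), the mismatch between the true state $\xi_f(k_j+m_{k_j})$ and its optimal prediction, and the terminal-cost variation over the appended terminal-controller segment. The mismatch term is controlled via Lemma 1 and the Gronwall--Bellman--Ou--Iang bound (26)--(27): the triggering rule (27)--(28) forces it to stay below $\sigma\eta e^{L_P\delta(\sigma-1)}$, which I convert into a class-$\mathcal{K}$ function $\gamma$ of $\|d\|_\infty$ using the Lipschitz moduli of the stage and terminal costs. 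The terminal-cost term is nonpositive by telescoping the decrease condition (33) of Assumption 1 along the $\kappa$-generated segment. Collecting terms gives $J^*(k_{j+1}) - J^*(k_j) \le -\alpha_3(\|e_X(k_j)\|) + \gamma(\|d\|_\infty)$, and interpolating over the inter-sample instants (where the applied open-loop input keeps the cost nonincreasing up to the same disturbance term) extends the estimate to all $k$. Combining the sandwich bounds on $J^*$ with this dissipation inequality and invoking the standard discrete-time ISS--Lyapunov theorem yields input-to-state stability of the payload system (23) with respect to $d$, which proves the theorem.

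I expect the main obstacle to be the recursive-feasibility step under the adaptive shrinking horizon: one must show simultaneously that shrinking by $n_{k_j}$ from (31) never removes so many steps that the terminal constraint fails, that the ordering (32) is preserved so the appended terminal-controller segment is well-defined and of the correct length, and that the robust terminal set $\Omega_\epsilon$ absorbs the accumulated disturbance. These requirements couple the horizon bookkeeping in (29)--(32) with the Lipschitz/Gronwall perturbation analysis, so they must be handled jointly rather than sequentially; the remaining estimates are then routine.
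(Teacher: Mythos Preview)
Your proposal is correct and follows essentially the same approach as the paper: take the optimal cost as an ISS--Lyapunov function, construct a feasible candidate at $k_{j+1}$ from the tail of the previous optimal solution appended with the terminal controller $\kappa$, bound the cost decrease via the triangle inequality and the Gronwall--Bellman--Ou--Iang estimate, and obtain a dissipation inequality of the form $V(k_{j+1})-V(k_j)\le -\|e_X^*(k_j)\|_Q^2 + \Delta(\sigma,N_p,\eta)$ with $\Delta$ a $\mathcal{K}_\infty$ function of $\eta$. Your write-up is in fact more explicit than the paper's on the recursive-feasibility bookkeeping (the interplay of (29)--(32)) and on the sandwich bounds needed to invoke the ISS--Lyapunov theorem, whereas the paper proceeds more directly by algebraic decomposition of the cost difference; but the skeleton and key tools are the same.
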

\begin{proof}
Take the optimal cost as a Lyapunov function and consider the difference of the cost between $k_j$ and $k_{j+1}$:
\begin{equation}
\begin{array}{l}
V(k_{j+1}) - V(k_j) = J_d(e_{X}^*(k_{j+1}), e_{U}^*(k_{j+1}), N_{k_{j+1}})  \\ 
- J_d(e_{X}^*(k_j), e_{U}^*(k_j), N_{k_j})  \\
= \mathlarger{\mathlarger{\sum}}_{i=0}^{N_{k_{j+1}} - 1} \left( \lVert e_{X}^*(k_{j+1} + i | k_{j+1}) \rVert_Q^2 + \lVert e_{U}^*(k_{j+1} + i | k_{j+1}) \rVert_P^2 \right)  \\
+ \lVert e_{X}^*(k_{j+1} + N_{k_{j+1}} | k_{j+1}) \rVert_R^2 \\
- \mathlarger{\mathlarger{\sum}}_{i=0}^{N_{k_j} - 1} \left( \lVert e_{X}^*(k_j + i | k_j) \rVert_Q^2 + \lVert e_{U}^*(k_j + i | k_j) \rVert_P^2 \right)  \\
- \lVert e_{X}^*(k_j + N_{k_j} | k_j) \rVert_R^2.
\end{array}
\end{equation}
Decomposing the terms and using the event-triggering condition, we get:
\begin{equation}
\begin{array}{l}
\mathlarger{\mathlarger{\sum}}_{i=0}^{N_{k_{j+1}} - 1} \left( \lVert e_{X}^*(k_{j+1} + i | k_{j+1}) \rVert_Q^2 + \lVert e_{U}^*(k_{j+1} + i | k_{j+1}) \rVert_P^2 \right)  \\
= \mathlarger{\mathlarger{\sum}}_{i=0}^{m_{k_j} - 1} \left( \lVert e_{X}^*(k_j + i | k_j) \rVert_Q^2 + \lVert e_{U}^*(k_j + i | k_j) \rVert_P^2 \right)  \\
\quad + \mathlarger{\mathlarger{\sum}}_{i=m_{k_j}}^{N_{k_j} - 1} \left( \lVert e_{X}^*(k_j + i | k_{j+1}) \rVert_Q^2 - \lVert e_{X}^*(k_j + i | k_j) \rVert_Q^2 \right) \\
\quad + \mathlarger{\mathlarger{\sum}}_{i=N_{k_j}}^{N_{k_{j+1}} - 1} \left( \lVert e_{X}^*(k_j + i | k_{j+1}) \rVert_Q^2 + \lVert e_{U}^*(k_j + i | k_{j+1}) \rVert_P^2 \right) \\
\quad + \lVert e_{X}^*(k_{j+1} + N_{k_{j+1}} | k_{j+1}) \rVert_R^2 - \lVert e_{X}^*(k_j + N_{k_j} | k_j) \rVert_R^2.
\end{array}
\end{equation}
Then, the above can be rewritten as:
\begin{equation}
\begin{array}{l}
V(k_{j+1}) - V(k_j) = \\ 
- \mathlarger{\mathlarger{\sum}}_{i=0}^{m_{k_j} - 1} \left( \lVert e_{X}^*(k_j + i | k_j) \rVert_Q^2 + \lVert e_{U}^*(k_j + i | k_j) \rVert_P^2 \right) \\
\quad + \mathlarger{\mathlarger{\sum}}_{i=m_{k_j}}^{N_{k_j} - 1} \left( \lVert e_{X}^*(k_j + i | k_{j+1}) \rVert_Q^2 - \lVert e_{X}^*(k_j + i | k_j) \rVert_Q^2 \right)  \\
\quad + \mathlarger{\mathlarger{\sum}}_{i=N_{k_j}}^{N_{k_{j+1}} - 1} \left( \lVert e_{X}^*(k_j + i | k_{j+1}) \rVert_Q^2 + \lVert e_{U}^*(k_j + i | k_{j+1}) \rVert_P^2 \right)  \\
\quad + \lVert e_{X}^*(k_{j+1} + N_{k_{j+1}} | k_{j+1}) \rVert_R^2 - \lVert e_{X}^*(k_j + N_{k_j} | k_j) \rVert_R^2.
\end{array}
\end{equation}
Using the feasible control input construction technique, the triangle inequality, and the Gronwall–Bellman–Ou–Iang-type inequality, we have the following result:
\begin{equation}
\begin{array}{l}
V(k_{j+1}) - V(k_j) \leq  \\
- \mathlarger{\mathlarger{\sum}}_{i=0}^{m_{k_j} - 1} \left( \lVert e_{X}^*(k_j + i | k_j) \rVert_Q^2 + \lVert e_{U}^*(k_j + i | k_j) \rVert_P^2 \right) \\
\quad + \Delta(m_{k_j}, N_{k_j}, \eta) \\
\leq -\lVert e_{X}^*(k_j) \rVert_Q^2 + \Delta(\sigma, N_p, \eta),
\end{array}
\end{equation}
where
\begin{equation}
\begin{aligned}
\Delta(m_{k_j}, N_{k_j}, \eta) &= \mathlarger{\mathlarger{\sum}}_{i=m_{k_j}}^{N_{k_j} - 1} \left[ \sigma^2 \eta^2 \overline{q} e^{2L_P(i-1)} + 2m_{k_j} \eta \overline{q}^2 e^{L_P(i-1)} \right] \\
&\quad + \sigma \eta \overline{r} e^{L_P(N_{k_j}-1)}(r + \epsilon).
\end{aligned}
\end{equation}
$\Delta(\sigma, N_p, \eta)$ is clearly a $K_{\infty}$ function with respect to $\eta$, implying that the tracking error will converge to a neighborhood of the origin. The range of this neighborhood is related to the minimum inter-execution time $\sigma$ and the bound of the disturbance $\eta$.
\end{proof}

This framework is specifically designed for the cooperative transportation of cable-suspended payloads with multi-quadrotors. By incorporating the event-triggering condition and adaptive prediction horizon strategies, the system ensures both stability and performance while efficiently managing the computational load. The triggering mechanism ensures that control updates are performed only when necessary, optimizing resource usage. This is particularly beneficial in multi-quadrotor systems where communication and computational resources are limited. The adaptive prediction horizon allows the system to balance between the complexity of the OCP and the stability of the control process, adapting dynamically to the state of the system. The result is a robust, efficient, and scalable solution for cooperative payload transportation.
\begin{figure}[h!]
\captionsetup{justification=centering}
 \centering \includegraphics[width=0.25 \textwidth]{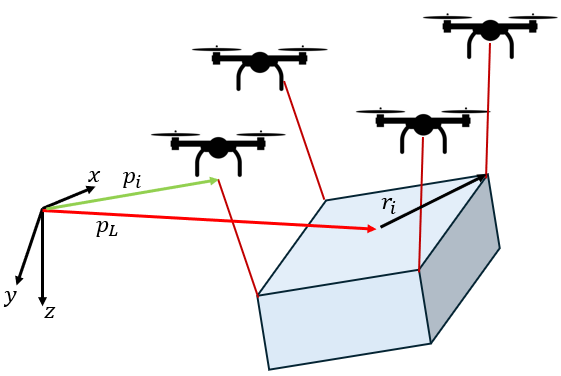}
  \caption{Cable-suspended load transportation by MAVs}  \label{Fig1}
\end{figure} 
\subsection{Quadrotor Control}
After solving the optimization problem outlined in equation (19), we obtain a series of system states $\{X_0, X_1, \ldots, X_N\}$ and inputs $\{U_0, U_1, \ldots, U_{N-1}\}$ over a fixed time horizon of $N$ steps. Selecting the initial input $U_0 = \begin{pmatrix} F \\ M \end{pmatrix}$  from the input sequence allows us to achieve the desired tension force for each cable of the quadrotor, denoted as $\mu_{\text{des}_k}$.
\begin{equation}
\begin{pmatrix}
\mu_{1,\text{des}} \\
\vdots \\
\mu_{n,\text{des}}
\end{pmatrix}
= \text{diag}(R_L, \ldots, R_L) P^\dagger \begin{pmatrix}
R^T_{L}F_0 \\
M_0
\end{pmatrix}
\end{equation}
Once these desired tension forces are determined, we select the tension input for each cable of the individual quadrotor by projecting the desired tension onto the corresponding cable, expressed as
\begin{equation}
\mu_k = \xi_k \xi_k^T \mu_{k,\text{des}}
\end{equation}
The desired direction $\xi_{k,\text{des}}$ and angular velocity $\omega_{k,\text{des}}$ for the $k$th cable link are then derived accordingly:
\begin{equation}
    \xi_{k,\text{des}} = -\frac{\mu_{k,\text{des}}}{\|\mu_{k,\text{des}}\|},  \ \ \ 
    \omega_{k,\text{des}} = \xi_{k,\text{des}} \times \dot{\xi}_{k,\text{des}},
\end{equation}
where $\dot{\xi}_{k,\text{des}}$ is the derivative of $\xi_{k,\text{des}}$.
The thrust $f_k$ and moments $M_k$ acting on the $k$th quadrotor are computed as follows:
\begin{equation}
\begin{array}{l} 
    f_k = u_k \cdot R_ke_3 = (u^{\parallel}_{k} + u^{\perp}_{k}) \cdot R_ke_3
    \end{array}
\end{equation}
\begin{equation}
\begin{array}{l} 
    M_k = K_Re_{R_k} + K_\Omega e_{\Omega_k} + \Omega_k \times J_k\Omega_k \\  \ \ \ \ \ \ \ \ \   - J_k(\hat{\Omega}_kR_k^T R_{k,\text{des}}\Omega_{k,\text{des}} - R_k^T R_{k,\text{des}}\dot{\Omega}_{k,\text{des}}).
\end{array}
\end{equation}
Here, $K_R$ and $K_\Omega$ are diagonal matrices with positive constants, and $e_{R_k}$ and $e_{\Omega_k}$ represent orientation and angular velocity errors, respectively.
\begin{equation}
\begin{array}{l} 
e_{R_k} = \frac{1}{2} \left( R_{k}^{\top} R_{k,\text{des}} - R_{k,\text{des}}^{\top} R_{k} \right)^\vee, \\
e_{\Omega_k} = R_{k}^{\top} R_{L, \text{des}} \Omega_{L,\text{des}} - \Omega_{L}.
\end{array}
\end{equation}

The inputs $u_{\perp k}$ and $u_{\parallel k}$ are designed according to specific formulations, incorporating terms such as cable direction, angular velocity, and acceleration.
\begin{equation}
\begin{array}{l} 
u^{\perp}_k = m_k l_k \xi_k - \Big [ K_{\xi_{k}} e_{\xi_k} - K_{\omega_k} e_{\omega_k} - \hat{\xi}_k^2 \omega_{k,\text{des}} \\ \ \ \ \ \ \ \ - (\xi_k \cdot \omega_{k,\text{des}}) \dot{\xi}_{k,\text{des}} \Big ] 
 - m_k \hat{\xi}_k^2 a_{k,c}, 
\end{array}
\end{equation}
\begin{equation}
\begin{array}{l} 
u^{\parallel}_k = \mu_k + m_k l_k \|\omega_k\|^2 \xi_k + m_k \xi_k \xi_k^\top a_{k,c}.
\end{array}
\end{equation}
where 
\begin{equation}
\begin{array}{l} 
a_{k,c} = \ddot{x}_{L,\text{des}} + g - R_L \hat{\rho}_k \dot{\Omega}_L + R_L \hat{\Omega}_L^2 \rho_k,
\end{array}
\end{equation}
where $K_{\xi_k}$ and $K_{\omega_k}$ are diagonal positive constant matrices,
$e_{\xi_k}$ and $e_{\omega_k}$ are the cable direction and cable angular
velocity errors respectively: 
\begin{equation}
\begin{array}{l} 
e_{\xi_k} = \xi_{k,\text{des}} \times \xi_k,
, \ \ \ e_{\omega_k} = \omega_k + \xi_k \times \xi_k \times \omega_{k,\text{des}}
\end{array}
\end{equation}
 For further details on stability analysis, readers are directed to reference [15].

\section{Simulations}
\begin{figure}[h]
\captionsetup{justification=centering}
 \centering \includegraphics[width=0.4 \textwidth]{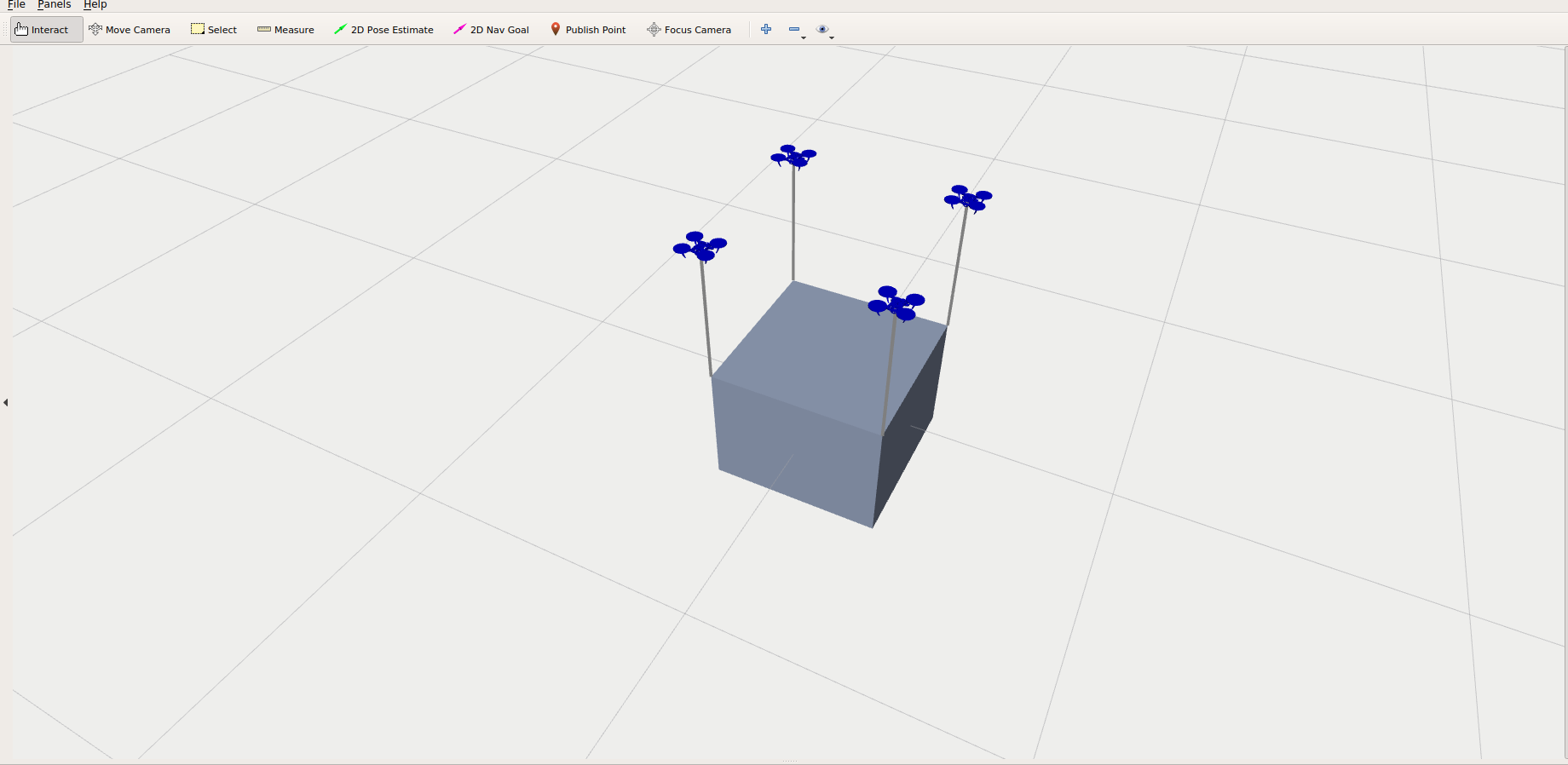}
  \caption{Visualization of the Cooperative Transportation and
Manipulation of Cable Suspended Payloads with Multiple Quadrotors in ROS and Gazebo}  \label{Fig1}
\end{figure} 
\begin{figure}[h]
\captionsetup{justification=centering}
 \centering \includegraphics[width=0.4 \textwidth]{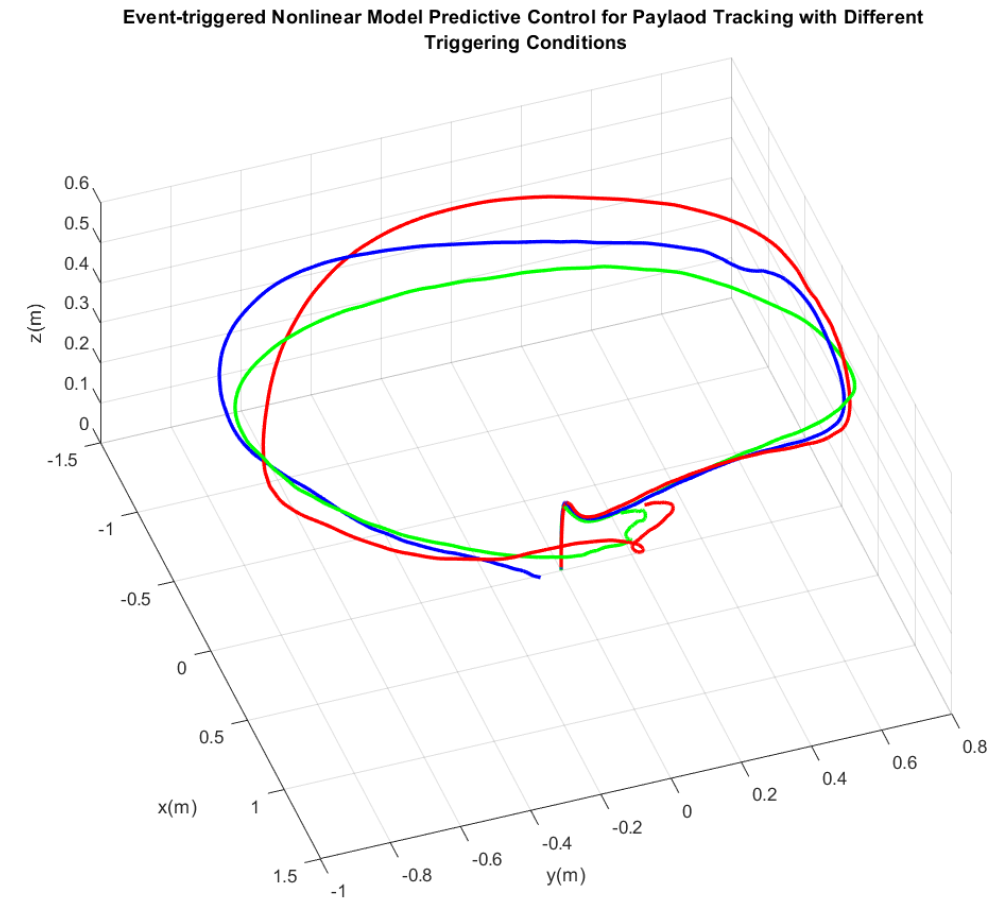}
  \caption{Event-triggered nonlinear model predictive control for position trajectory of payload in 3D with different 
triggering conditions}  \label{Fig1}
\end{figure} 
In this section, we demonstrate the effectiveness of our event-based NMPC method through simulation results. We describe our simulation setup, including the environments and platforms used, and present results for four quadrotors transporting a rigid-body payload along a circular trajectory. 
\begin{figure} [h]
    \centering
    \begin{subfigure}[b]{0.23\textwidth}
        \includegraphics[width=\textwidth]{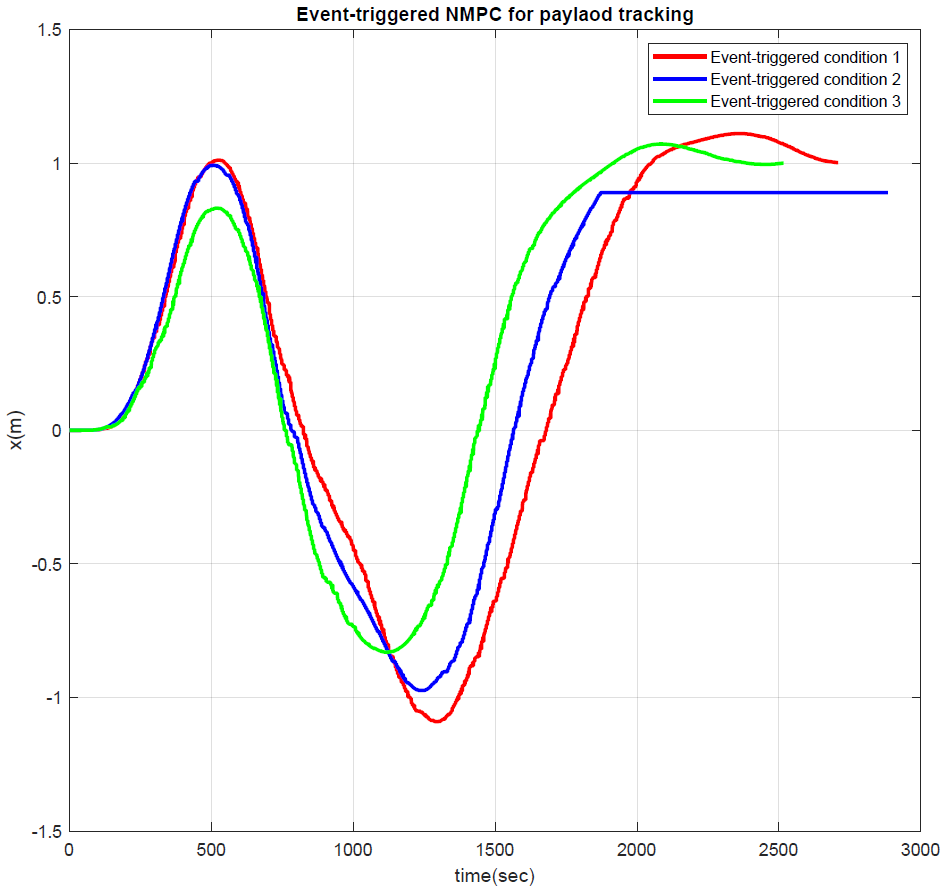}
        \caption{}
        \label{fig:gull}
    \end{subfigure}
    ~ 
    \begin{subfigure}[b]{0.23\textwidth}
        \includegraphics[width=\textwidth]{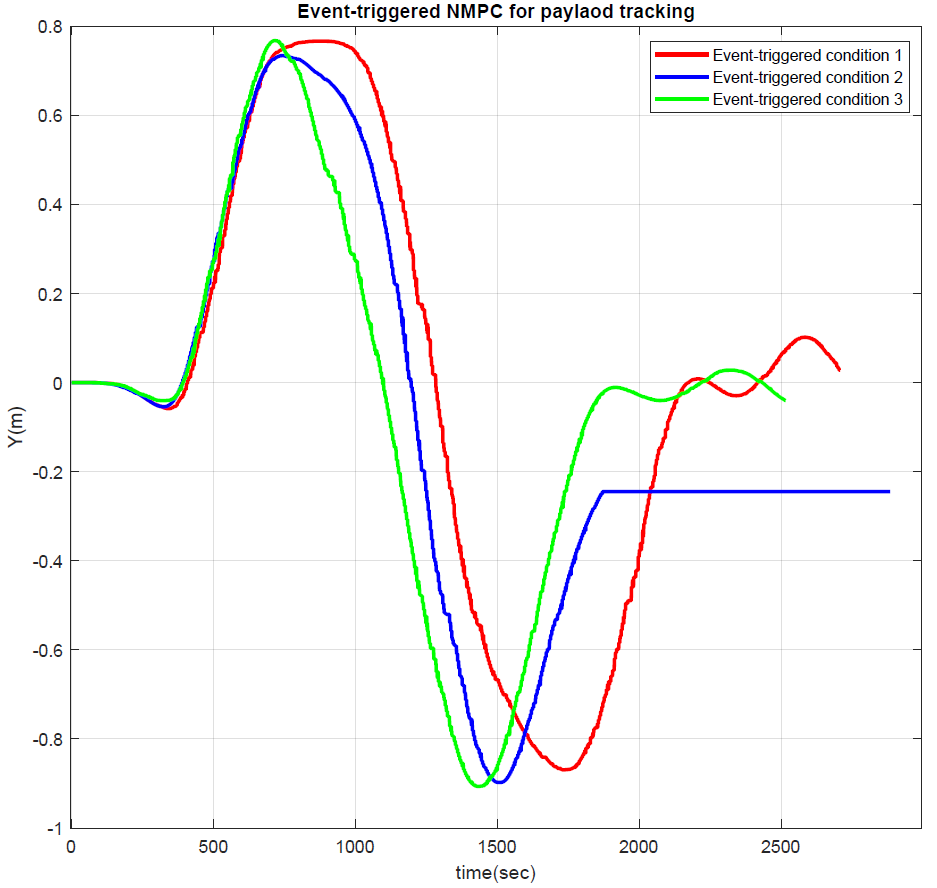}
        \caption{}
        \label{fig:tiger}
    \end{subfigure}
    ~ 
    \begin{subfigure}[b]{0.23\textwidth}
        \includegraphics[width=\textwidth]{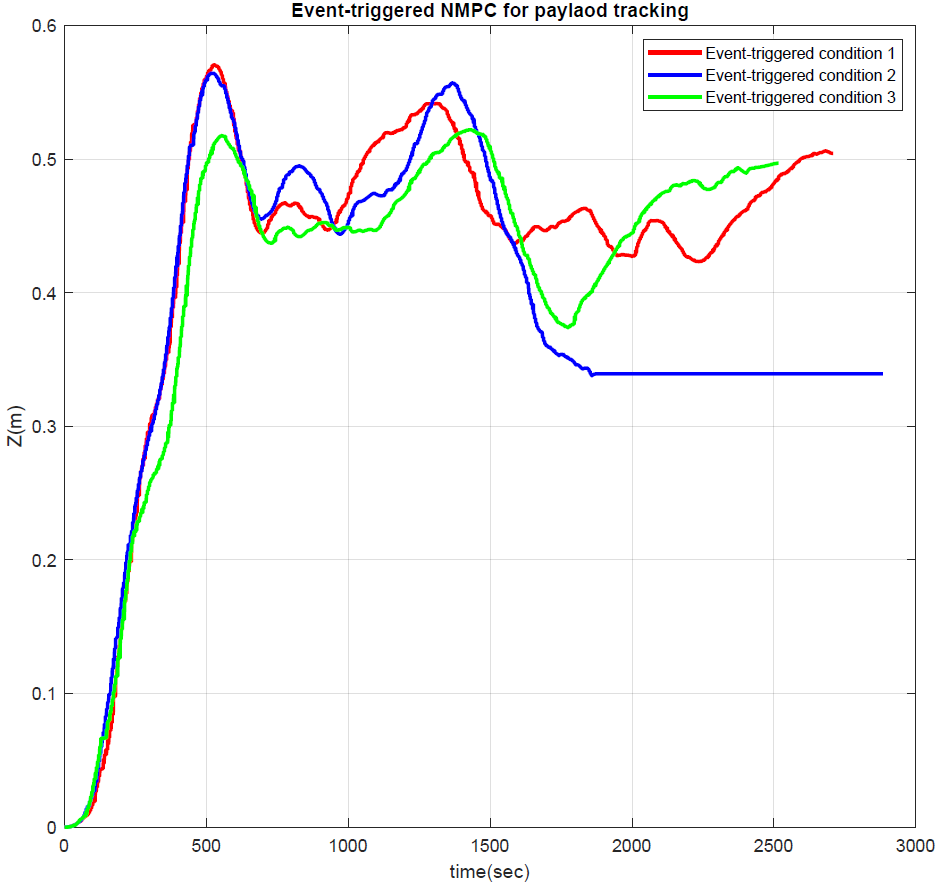}
        \caption{}
        \label{fig:mouse}
    \end{subfigure}
    \caption{Event-triggered nonlinear model predictive control for position trajectory of payload with different 
triggering conditions} \label{Fig5}
\end{figure} 
\begin{figure} [h]
    \centering
    \begin{subfigure}[b]{0.23\textwidth}
        \includegraphics[width=\textwidth]{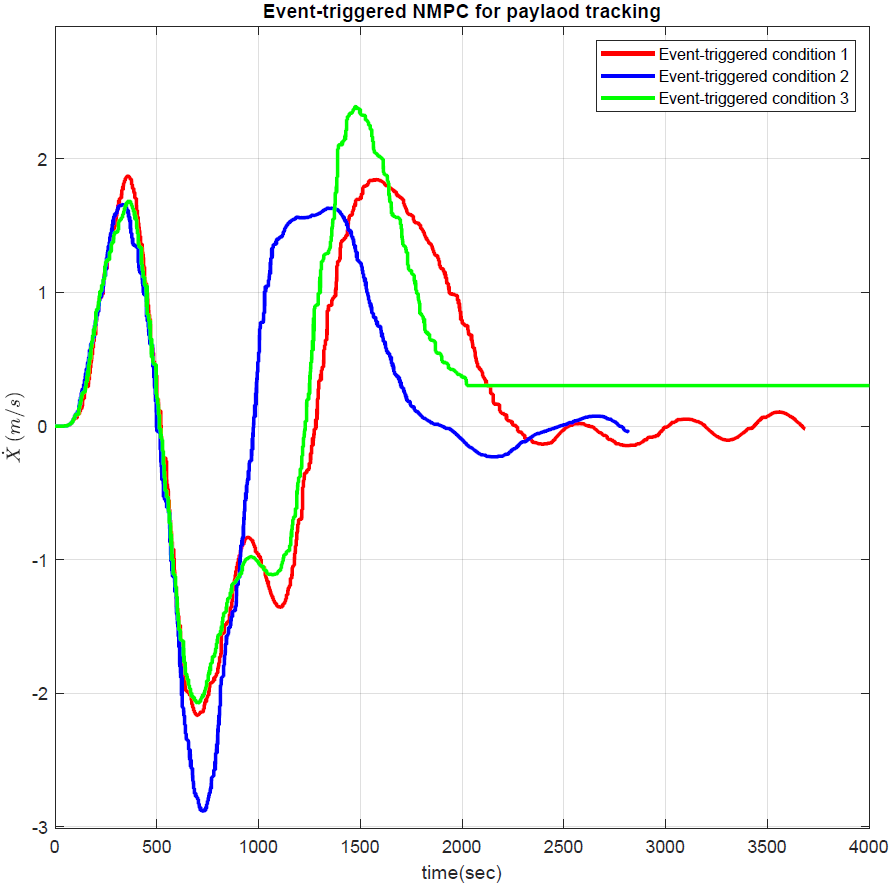}
        \caption{}
        \label{fig:gull}
    \end{subfigure}
    ~ 
    \begin{subfigure}[b]{0.23\textwidth}
        \includegraphics[width=\textwidth]{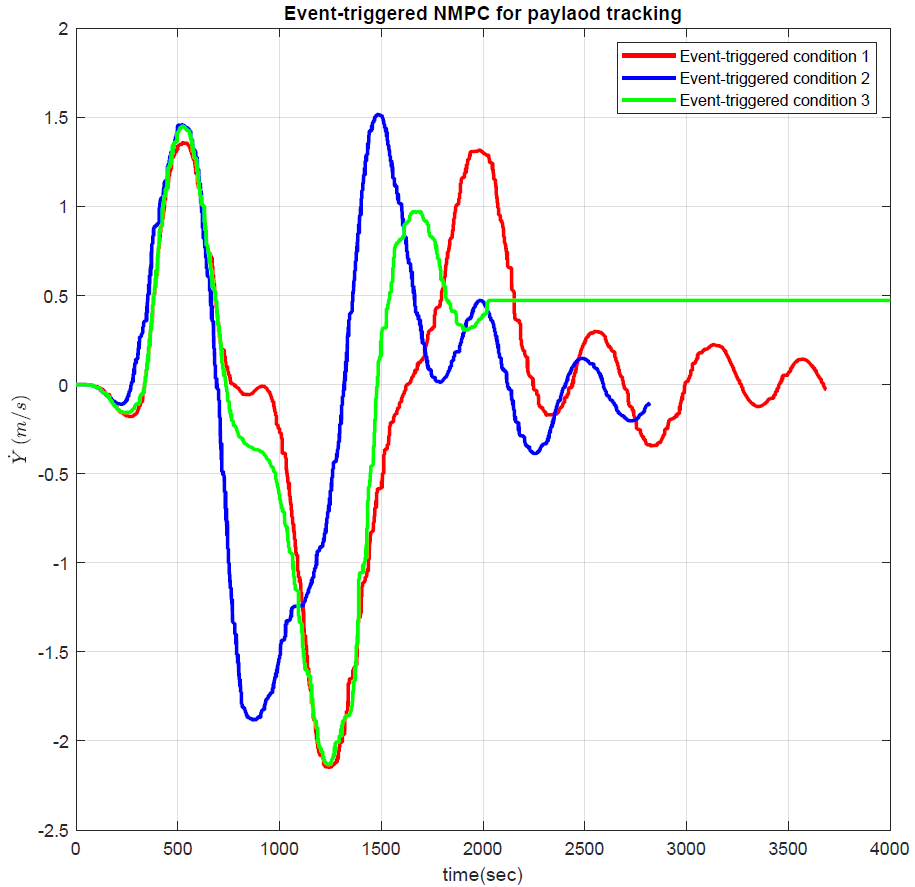}
        \caption{}
        \label{fig:tiger}
    \end{subfigure}
    ~ 
    \begin{subfigure}[b]{0.23\textwidth}
        \includegraphics[width=\textwidth]{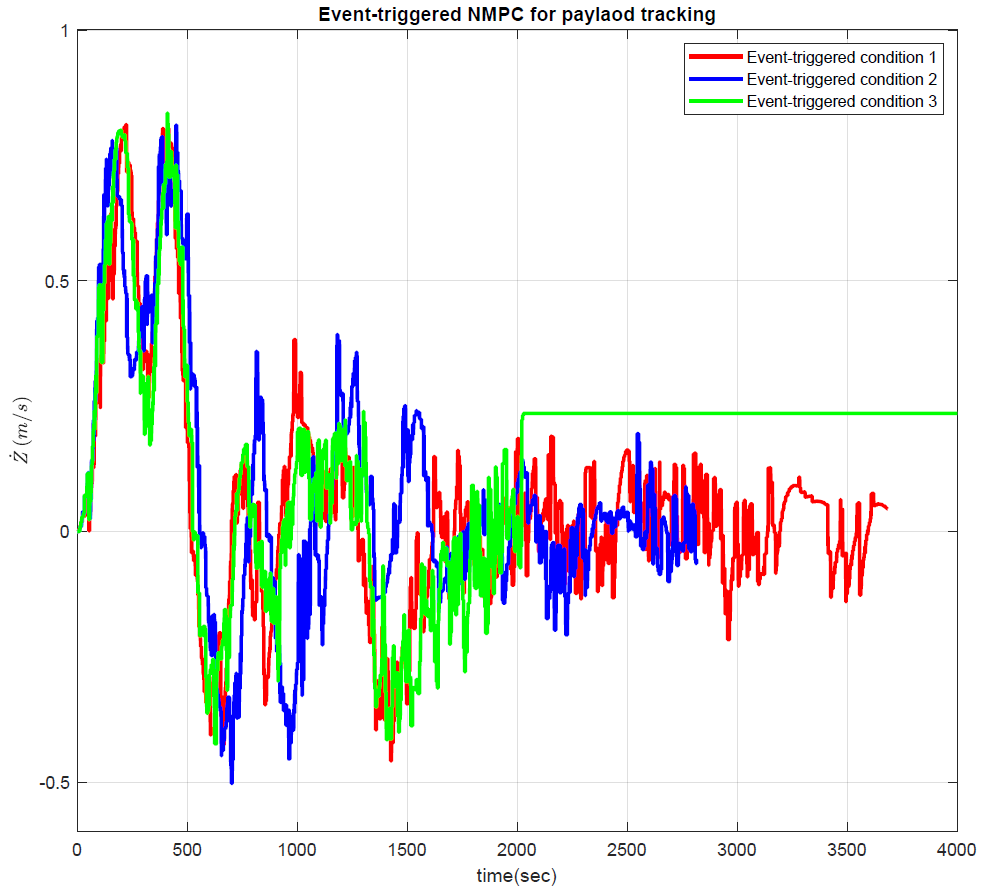}
        \caption{}
        \label{fig:mouse}
    \end{subfigure}
    \caption{Event-triggered nonlinear model predictive control for velocity trajectory of payload with different 
triggering conditions} \label{Fig5}
\end{figure} 
\begin{figure}[h]
\captionsetup{justification=centering}
 \centering \includegraphics[width=0.5 \textwidth]{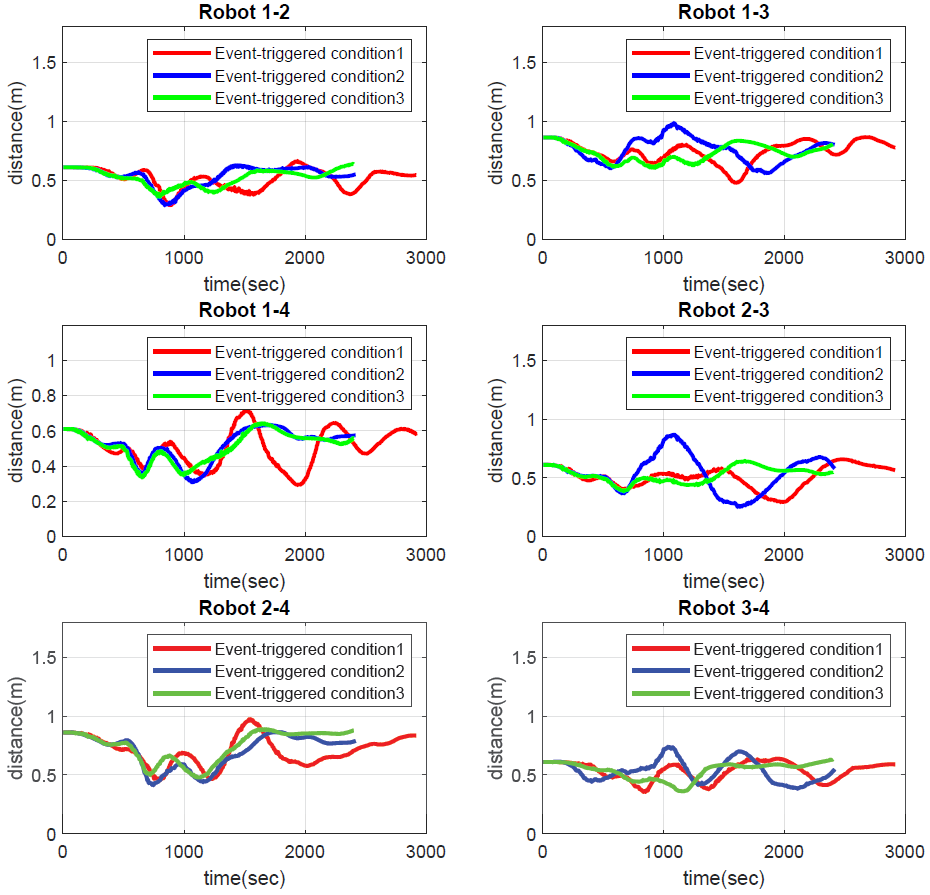}
  \caption{Inter-robot distance in the circular trajectory
tracking in simulation, staying at the boundary of 1 m}  \label{Fig1}
\end{figure} 
\subsection{Simulation Setup}
We conducted our simulations by considering three different event-triggering conditions and tuning the NMPC parameters, including the horizon and weights. The simulations were performed in a virtual environment created in Gazebo, with a rectangular payload suspended from four quadrotors, each connected by a 1-meter cable (\(l_k = 1\) m).  The payload mass was 232 g, exceeding the capacity of each individual vehicle. Fig. 3 represent the visualization of the cooperative
transportation and manipulation of cable suspended payloads with multiple quadrotors in ROS and Gazebo.
\subsection{Trajectory Tracking}

In this section, we evaluate the performance of our proposed event-based NMPC method in trajectory-tracking simulations. The results are shown in Figs. 4, 5 and 6. We consider a circular trajectory in our simulations. The circular trajectory is defined as

\[
x_{c,L,\text{des}}(t) = \begin{bmatrix} r \cos \left(\frac{2\pi t}{T_c}\right) \\ r \sin \left(\frac{2\pi t}{T_c}\right) \\ h \end{bmatrix}^\top,
\]

with a period \(T_c = 15\) s, radius \(r = 1.0\) m, and a constant height \(h = 0.5\) m. The rectangular trajectory has dimensions of 2 m in the \(x\) direction and 1 m in the \(y\) direction. We consider three different event-triggering conditions and evaluate the performance of NMPC control and the number of executions of the optimization algorithm. Note that the number of executions of the optimization algorithm (NMPC) for event-triggered conditions 1, 2, and 3 are 25, 33, and 49, respectively.
  In Figures 4,  5, and 6, we can observe that the position and velocity of the center of mass of the payload in the green graph, which has a higher number of triggerings, results in better tracking of the desired trajectory and faster convergence. However, this comes at the cost of a higher number of executions of the optimization algorithm. Conversely, the position of the center of mass of the payload in the red graph, which has fewer triggerings, leads to slower convergence in trajectory tracking. The blue graph, on the other hand, demonstrates a balance between control performance and the number of triggerings, offering a compromise between the extremes seen in the green and red graphs.  Before the NMPC begins operation, the distances between robots 1-2, robots 1-4, robots 2-3, and robots 3-4 are approximately 0.6 meters, and the distances between robots 1-3 and robots 2-4 are approximately 0.86 meters, all of which are below the 1-meter threshold. Once the NMPC is activated, Figure 7 shows the inter-robot distances based on the different event-triggered conditions. It can be seen that the inter-robot distances based on condition 3 have fewer oscillations because the number of executions of the NMPC optimization is higher than the others. This is because the optimization process aims to minimize the square norm of the null space coefficients to conserve energy. Additionally, all inter-robot distances eventually converge to the initial distances. However, these simulation results show that we can achieve a balance between scalability and performance by tuning event-triggered parameters. Additionally, the tension constraints implicitly ensure that the actuator constraints are met.

\section{CONCLUSION}
In this paper, we have presented a novel event-triggered distributed Nonlinear Model Predictive Control (NMPC) method for cooperative transportation of cable-suspended payloads using multiple quadrotors. Our approach addresses critical challenges such as payload manipulation, inter-robot separation, obstacle avoidance, and trajectory tracking, all while enhancing computational efficiency and energy utilization. By integrating an event-triggered mechanism, we significantly reduce unnecessary computations and communication, extending the operational range and endurance of MAVs.

The lightweight state vector parametrization and SE(3) manifold trajectory planning employed in our method ensure real-time computational feasibility, making it suitable for dynamic and resource-constrained environments. Our approach has been rigorously validated through simulations, demonstrating its robustness and scalability.

Overall, our event-triggered NMPC framework offers a robust, efficient, and scalable solution for cooperative transportation tasks in challenging environments such as warehouses and GPS-denied areas. Future work will focus on further enhancing the system's adaptability to varying payload characteristics and environmental conditions, as well as exploring additional applications in complex real-world scenarios. 

Future work will explore enhancing the proposed formulation by incorporating perception objectives or constraints to maximize the effectiveness of the method in autonomous tasks. Moreover, there remain several unresolved research queries, including the robustness of control algorithms to uncertainties like external disturbances and measurement noise. A thorough understanding of these facets can ensure the system's reliable operation in real-world settings.

\end{document}